\begin{document}

\markboth{Authors' Names}
{ILP formulations for triple and quadruple Roman domination problem}

%
\catchline{}{}{}{}{}
%

\title{Integer Linear Programming Formulations for Triple and Quadruple Roman Domination Problems.
}

\author{Sanath Kumar Vengaldas   }

\address{Computer Science and Engineering, National Institute of Technology, Warangal\\
 Telangana, India\\
\email{vengal\_851907@student.nitw.ac.in }}
\author{Adarsh Reddy Muthyala    }

\address{Computer Science and Engineering, National Institute of Technology, Warangal\\
 Telangana, India\\
\email{muthya\_911905@student.nitw.ac.in }}

\author{Bharath Chaitanya Konkati}

\address{Computer Science and Engineering, National Institute of Technology, Warangal\\
 Telangana, India\\
\email{konkat\_851950@student.nitw.ac.in }}
\author{P. Venkata Subba Reddy \\ } 

\address{Computer Science and Engineering, National Institute of Technology, Warangal\\
 Telangana, India\\
\email{pvsr@nitw.ac.in }}
\maketitle
\begin{history}
\received{13 March 2023}
\end{history}

\begin{abstract}

\noindent Roman domination is a well researched topic in graph theory. Recently two new variants of Roman domination, namely triple Roman domination and quadruple Roman domination problems have been introduced, to provide better defense strategies. However, triple Roman domination and quadruple Roman domination problems are NP-hard. In this paper, we have provided genetic algorithm for solving triple and quadruple Roman domination problems. Programming (ILP) formulations for triple Roman domination and quadruple Roman domination problems have been proposed. The proposed models are implemented using IBM CPLEX 22.1 optimization solvers and obtained results for random graphs generated using NetworkX Erdős-Rényi model.
\end{abstract}

\keywords{Dominating Set; Roman Domination Number; Triple Roman Domination; Quadruple Roman Domination; Integer Linear Programming.}

\ccode{Mathematics Subject Classification: 05C69, 68Q25}
\section{Introduction}\label{sec1}

Graphs $G(V,E)$ considered are undirected, connected and simple. Let $V(G)$ represent the vertex set and $E(G)$ represent the edge set of $G$. 
$N(v)$ represents the \textit{open neighbourhood} of vertex $v$, \begin{math} 
N(v)=\{u\mid(u,v) \in E\} 
\end{math} and $N(v) \cup \{v\}$ represents the \textit{closed neighbourhood} $N[v]$ of vertex $v$.

 A  function \begin{math} f  : V\rightarrow \{0,1\} \end{math} on a graph $G$  satisfying the condition that for every vertex $v \in V$ if $f(v)=0$ then $v$ has a neighbour $u$ such that $f(u)=1$ is called a \textit{dominating function} of $G$. \textit{Domination number} (\begin{math} \gamma(G) \end{math}) is the minimum possible sum of $f$ values assigned to all vertices of $G$. Optimization version of domination problem is NP-hard and decision version is NP-complete  \cite{Fundamentals}.

Based on the defense tactics to defend the Roman empire, the concept of Roman domination was introduced in 2004 \cite{roman1},\cite{roman3},\cite{roman2}. A legion from a powerful neighbour might repel an unexpected invasion on any neighbouring undefended city and if such moving would leave its existing city defenceless, no legion could relocate. A function \begin{math} f  : V\rightarrow \{0,1,2\} \end{math} on $G$ is called a \textit{Roman dominating function} if  for every vertex \begin{math} v \in V \end{math}, either $f(v) \in \{1,2\}$ or if $f(v)=0$ then it should have at least 1 neighbour $u$ such that $f(u)=2$. \textit{Roman domination number} (\begin{math} \gamma_{R}(G) \end{math}) is the minimum possible sum of $f$ values assigned to all vertices of $G$. Optimization version of Roman domination problem is NP-hard \cite{NPCRD}. 

Double Roman domination was introduced as a stronger version of weak Roman domination and Roman domination in \cite{double}. A function \begin{math} f  : V\rightarrow \{0,1,2,3\} \end{math} on a graph $G$ is called a \textit{double Roman dominating function} if following conditions are satisfied, for every vertex \begin{math} v \in V \end{math}, if $f(v)=0$, then vertex $v$ must have at least one neighbour $u$ with $f(u)=3$ or at least 2 neighbours assigned value 2 under $f$ , and if $f(v)=1 $, then $v$must have at least 1 neighbour $w$ with \begin{math}f(w) \ge 2\end{math}. A \textit{Double Roman domination number} \begin{math} \gamma_{dR}(G) \end{math} is the minimum possible sum of $f$ values of all vertices of $G$.

In \cite{triple}, \textit{triple Roman domination} problem is introduced because double Roman domination problem is not efficient in some cases as discussed in \cite{double}.A function \begin{math} f  : V\rightarrow \{0,1,2,3,4\} \end{math} on a graph $G=(V,E)$ is called a \textit{triple Roman dominating function} $(3RDF)$ if the following conditions are satisfied,for every vertex \begin{math} v \in V \end{math}, if $f(v)=0$ then $v$ must have at least 3 neighbours assigned value 2 under $f$ or two neighbours $u$,$w$ with  \begin{math} f(u) \ge 2 \end{math} and $f(w)=3$ or one neighbour $z$ with \begin{math} f(z) = 4 \end{math}, and if $f(v)=1 $ then vertex $v$ must have at least 2 neighbours assigned value 2 under $f$ or at least 1 neighbour $w$ such that \begin{math}f(w) \ge 3\end{math} , and if $f(v) = 2$, then vertex $v$ must have at least 1 neighbour $w$  with \begin{math} f(w) \ge 2 \end{math}. Weight of a 3RDF is the sum of function $f$ values of all the vertices. \textit{Triple Roman domination number} of a graph $G$ (\begin{math} \gamma_{3R}(G) \end{math}) is minimum weight of all the possible triple Roman dominating functions. Determining $\gamma_{3R}(G)$ for a graph $G$ is known as triple Roman domination problem (M3RDP).  M3RDP for bipartite graphs and chordal graphs, triple Roman domination problem is NP-hard \cite{triple}.

A function \begin{math} f  : V\rightarrow \{0,1,2,3,4,5\} \end{math} on a graph $G=(V,E)$ is called \textit{quadruple Roman dominating function} (4RDF) if the following conditions are satisfied, for every vertex \begin{math} v \in V \end{math}, if $f(v)=0$ then \begin{math}v\end{math} must have at least 1 neighbour vertex \begin{math}u\end{math}  with $f(u) = 5$ or two vertices $u,w$ such that $f(u) = 4$ and $f(w) \ge 2$ or at least 2 vertices such that $f(u) \ge 3 $ or at least 4 vertices such that $f(u) = 2$ or two vertices with $f(u) = 2$ and one vertex with $f(u) = 3$, and if $f(v) = 1$ then vertex $v$ must have at least 1 neighbour $u$ with $f(u) \ge 4$ or at least 2 vertices $u,w$ such that $f(u) = 3$ and $f(w) \ge 2$ or at least 3 vertices with $f(u) = 2$, and if $f(v) = 2$ then vertex $v$ must have at least 1 neighbour $u$ with $f(u) \ge 3$ or at least 2 vertices with $f(u) = 2$, and if $f(v) = 3$ then it must have at least 1 neighbour $u$ with $f(u) \ge 2$. Weight of a 4RDF is the sum of function $f$ values of all the vertices. \textit{Quadruple Roman domination number} of a graph $G$ (\begin{math} \gamma_{4R}(G) \end{math}) is the minimum weight of all the possible 4RDFs. Determining $\gamma_{4R}(G)$ for a graph $G$ is known as quadruple Roman domination problem (M4RDP). Throughout the paper, by $\gamma_{3R}$-function we mean any 3RDF of graph $G$ with weight $\gamma_{3R}(G)$. Similarly $\gamma_{4R}$-function is defined.

In general $[k]$-Roman dominating function ($k$RDF) of a graph $G$ is also discussed in \cite{triple} and it is a mapping \begin{math} f  : V\rightarrow \{0,1,2,3,\ldots, k+1\} \end{math} such that for every vertex \begin{math} u\in V \ for\ which\ f(u)\le k \end{math} following condition should satisfy.
	    \begin{center}
	        \begin{math}
	            {h(AN[u]) \ge\;\lvert AN(u)\rvert +k}
	        \end{math}
	    \end{center}
	    Here, 
	     \begin{math}
	            AN(u)=\{z:z \in N(u) \; \text{and} \; f(v) \ne 0 \},
	       \end{math}
	       \begin{math}
	            AN[u] = AN(u)  \cap   \{u\}	     
	       \end{math} 
	     and $\lvert AN(u)\rvert $ represents is the cardinality of $AN(u)$.
The weight of a $k$RDF is \begin{math} \sum\limits_{{u \in V}}f(u) \end{math}. The \textit{$[k]$-Roman domination number} \begin{math} \gamma_{[kR]} (G)\end{math} is the minimum weight of all the $k$RDFs on $G$. \\ 
Motivated by the Integer Linear Programming (ILP) formulations given for different domination problems \cite{ilp}  \cite{milp} \cite{ilprdp} \cite{ilpwrdp}, in this paper we present different  ILP formulations for M3RDP and M4RDP problems. The proposed models are implemented using IBM CPLEX 22.1 optimization solvers and obtained results for random graphs generated using NetworkX Erdős-Rényi model, a widely used  model for generating random graphs  \cite{renyi1}  \cite{renyi}.

\section{ILP Formulations for  Triple Roman Domination Problem}\label{sec2}
   Proposed ILP formulations for M3RDP are discussed in this section.
 \subsection{M3RDP-1}\label{sec3}
        This is very basic ILP formulation for M3RDP where we assigned four decision variables (\begin{math}p_v,q_v,r_v,s_v\end{math}) for each vertex $v$ and their definitions are as follows
           \begin{center}
        \begin{math}
            p_v=\left\{
                \begin{array}{ll}
                  1, & \mbox{$f(v)$ = 1}.\\
                  0, & \mbox{otherwise}.
                \end{array}
            \right.
        \end{math} 
        \begin{math}
          q_v=\left\{
            \begin{array}{ll}
              1, & \mbox{$f(v)$ = 2}.\\
              0, & \mbox{otherwise}.
            \end{array}
          \right.
        \end{math} 
        \begin{math}
          r_v=\left\{
            \begin{array}{ll}
              1, & \mbox{$f(v)$ = 3}.\\
              0, & \mbox{otherwise}.
            \end{array}
          \right.
        \end{math} 
        \begin{math}
          s_v=\left\{
            \begin{array}{ll}
              1, & \mbox{$f(v)$ = 4}.\\
              0, & \mbox{otherwise}.
            \end{array}
          \right.
        \end{math} 
        \begin{math}
            t_v=\left\{
               \begin{array}{ll}
                 1, & \mbox{\begin{math}
                  \sum\limits_{{u\in N(v)}}q_u \ge 1
                 \end{math}}\\
                 0, & \mbox{otherwise}.
               \end{array}
            \right.
        \end{math}
        \begin{math}
            x_v=\left\{
               \begin{array}{ll}
                 1, & \mbox{\begin{math}
                  \sum\limits_{{u\in N(v)}}r_u \ge 1
                 \end{math}}\\
                 0, & \mbox{otherwise}.
               \end{array}
            \right.
        \end{math}
    \end{center}
An ILP formulation for M3RDP is given below:  
\begin{subequations} 
    \begin{equation}
    \hspace{-5mm} Minimize \hspace{10mm} \sum\limits_{{v\in V}} p_v + \sum\limits_{{v\in V}} 2q_v + \sum\limits_{{v\in V}} 3r_v + \sum\limits_{{v\in V}} 4s_v 
    \end{equation}
    Such that
    \begin{equation}
    p_v + q_v + r_v + s_v + \sum_{u \in N(v)}s_u + \frac{1}{3} \sum_{u \in N(v)}q_u + \frac{1}{2}\left( t_v + x_v \right) \ge 1
    \end{equation}

    \begin{equation}
     \frac{1}{2} \sum_{u \in N(v)}q_u + \sum_{u \in N(v)}\left ( s_u + r_u \right) \ge p_v 
    \end{equation}
    \begin{equation}
        \sum_{u \in N(v)}\left(q_u + r_u + s_u \right)
      \ge q_v
    \end{equation}
    \begin{equation}
       p_v + q_v + r_v + s_v
      \le 1
    \end{equation}
    \begin{equation}
       p_v , q_v , r_v , s_v ,t_v,x_v
      \in \{0,1\}
    \end{equation}
\end{subequations}
\\
As we can see number of decision variables are 4$\lvert V \rvert$ (\begin{math}p_v,q_v,r_v,s_v\end{math}) and number of constraints are 5$\lvert V \rvert $.

Constraint \textcolor{blue}{(1a)} gives the objective function value which is also the value of triple Roman domination number \begin{math}\gamma_{3R}(G)\end{math}. Constraint \textcolor{blue}{(1b)} is to ensure that every vertex with $f(v)=0$ should have at least 1 neighbour with $f(v)=4$ or at least 3 neighbours with
$f(v)=2$ or at least 2 vertices $v$,$w$ such that $f(v)$=2
and $f(w)=3$. Constraint \textcolor{blue}{(1c)} is to 
ensure that for every vertex $v$ with $f(v)=1$ there 
should be at least 2 neighbours with $f(v)=2$ or at least 1 neighbour with $f(v)\ge 3$.
Constraint \textcolor{blue}{(1d)} ensures that for every
vertex $v$ with $f(v)=2$ there should be at least 1 
neighbour with $f(v)\ge2$.
Constraint \textcolor{blue}{(1e)} is to ensure that every vertex gets only one label from the set \{0,1,2,3,4\}. Constraint
\textcolor{blue}{(1f)} ensures that variables \begin{math}p_v,q_v,r_v,s_v\end{math} does not take any other integer values other than 0 or 1, restricting them to be
decision variables.

\subsection{M3RDP-2}\label{sec4}
       In this section, we propose an ILP formulation for M3RDP with only three binary variables. Substituting $p_v$ = 0 in M3RDP-1 gives us a new model M3RDP-2.
      After reducing one variable the formulation will become
 \begin{subequations}
    \begin{equation}
        \hspace{-20mm} Minimize \hspace{10mm}
        \sum\limits_{{v\in V}} 2q_v + \sum\limits_{{v\in V}} 3r_v + \sum\limits_{{v\in V}} 4s_v 
    \end{equation}
    \hspace*{1cm}Such that
    
    \begin{equation} 
         q_v + r_v + s_v + \sum_{u \in N(v)}s_u + \frac{1}{3} \sum_{u \in N(v)}q_u +  \frac{1}{2}\left ( t_v + x_v \right)
     \ge 1 
    \end{equation}
    \begin{equation}
        \sum_{u \in N(v)}\left( q_u + r_u + s_u \right)
        \ge q_v
    \end{equation}
    \begin{equation}
        q_v + r_v + s_v
        \le 1
    \end{equation}
    \begin{equation}
        q_v , r_v , s_v ,t_v,x_v
        \in \{0,1\}
    \end{equation}
 \end{subequations}
 \\
Here the total number of decision variables are 3$\lvert V \rvert$ (\begin{math}q_v,r_v,s_v\end{math}) and the number of constraints are 4$\lvert V \rvert $. Let's say \begin{math} \gamma_{3R}^{'}(G) \end{math} is the weight of triple Roman domination function we get from M3RDP-2. \medskip \\
Next we show that ILP formulations M3RDP-1 and M3RDP-2 are equivalent except for the number of variables and constraints.
\begin{lemma}
 Let $h$ be a $\gamma_{3R}$-function  of $G$. Then there doesn't exist two adjacent vertices $v$,$u$ with labels as $h(v)=1$ and $h(u)=4$.
 \begin{proof}
By contradiction, let $s$ and $t$ be two adjacent vertices of $G$ with labels 1 and 4 under $h$. Now, we obtain a labelling $h'$ of $G$ from $h$ as follows.
\begin{equation} \label{3RDFwo14}
h'(v) =
\begin{cases}
	0, & \text{ if } \; v=s  \\
	4, & \text{ if } \; v=t \\
 h(v), & \text{otherwise}
\end{cases}          
\end{equation}  
\end{proof}
\end{lemma}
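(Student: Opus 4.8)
The plan is to finish the proof by contradiction that has already been begun. Because $h$ is a $\gamma_{3R}$-function, it has minimum weight among all $3$RDFs of $G$, so it suffices to show that the relabelling $h'$ displayed above is again a valid $3$RDF but of strictly smaller weight. Since $h'$ changes only the label of $s$, lowering it from $1$ to $0$, while keeping $t$ at $4$ and leaving every other vertex fixed, we immediately obtain $w(h') = w(h) - 1 < w(h)$, where $w$ denotes the weight. Everything then reduces to checking that $h'$ still satisfies every defining condition of a $3$RDF.

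First I would treat $s$ itself: under $h'$ we have $h'(s)=0$, and since $s$ is adjacent to $t$ with $h'(t)=4$, vertex $s$ has a neighbour of value $4$ and hence meets the requirement imposed on a vertex labelled $0$. The remaining, and genuinely important, point is that no other vertex loses a condition it previously satisfied. The key observation is that a vertex carrying label $1$ is never counted in the domination requirement of any of its neighbours: in each of the three conditional cases of the definition -- a neighbour valued $0$, $1$, or $2$ -- only neighbours whose value is at least $2$ are ever tallied. Consequently, reducing $h(s)=1$ to $h'(s)=0$ destroys no contribution on which any neighbour of $s$ relied, and every vertex other than $s$ sees exactly the neighbour-labels it saw under $h$.

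Putting these together, $h'$ is a $3$RDF with $w(h') < w(h)$, contradicting the minimality of the $\gamma_{3R}$-function $h$; this forces the conclusion that no such adjacent pair $s,t$ can exist. I expect the only real obstacle to be the case analysis in the second step -- explicitly confirming, across all three conditions in the definition of a $3$RDF, that a label of $1$ is never used to satisfy a neighbour -- after which the contradiction is immediate.
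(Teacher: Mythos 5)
Your proposal is correct and follows essentially the same route as the paper: the identical relabelling $h'(s)=0$, $h'(t)=4$, followed by the observation that a label-$1$ vertex never contributes to any neighbour's domination condition, yielding $w(h')<w(h)$ and the desired contradiction. You merely spell out the verification that the paper compresses into the single sentence ``no other label depends on vertex with label 1.''
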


Clearly $h'$ is a 3RDF of $G$, because no other label depends on vertex with label 1. Hence $w(h') < w(h)$, a contradiction.
\begin{theorem} 
M3RDP-1 is equivalent to M3RDP-2.
\end{theorem}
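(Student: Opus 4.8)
The plan is to show that the two programs have the same optimal value, $\gamma_{3R}(G)=\gamma_{3R}'(G)$, noting that they share the same objective and that M3RDP-2 is precisely M3RDP-1 with the added constraints $p_v=0$ (under which the old constraint $(1c)$ becomes $\frac{1}{2}\sum_{u\in N(v)}q_u+\sum_{u\in N(v)}(s_u+r_u)\ge 0$ and is vacuous). One inequality is then immediate: every feasible point of M3RDP-2 extends to a feasible point of M3RDP-1 with identical objective value by keeping $p_v=0$, so the feasible region of M3RDP-2 sits inside that of M3RDP-1 and hence $\gamma_{3R}(G)\le\gamma_{3R}'(G)$. The content of the theorem is the reverse inequality $\gamma_{3R}'(G)\le\gamma_{3R}(G)$, which I would reduce to the combinatorial claim that \emph{there exists a $\gamma_{3R}$-function of $G$ that assigns the label $1$ to no vertex}. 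Any such function is feasible for M3RDP-2 with weight $\gamma_{3R}(G)$, which yields the inequality.

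To produce such a function, among all $\gamma_{3R}$-functions of $G$ I would pick one, call it $h$, minimizing the number of vertices carrying the label $1$, and argue by contradiction that this number is $0$. Suppose $v$ has $h(v)=1$. By the defining condition for label $1$, either $v$ has a neighbour of label at least $3$, or $v$ has two neighbours $a,b$ of label $2$. In each case I would exhibit a weight-preserving relabelling that is still a 3RDF but has strictly fewer vertices of label $1$, contradicting the choice of $h$: if $v$ has a neighbour of label $4$, the previous Lemma already forbids this in a $\gamma_{3R}$-function; if $v$ has a neighbour $w$ with $h(w)=3$, reassign $h(v)\mapsto 0$ and $h(w)\mapsto 4$; and if $v$ has two neighbours $a,b$ of label $2$, reassign $h(v)\mapsto 0$ and $h(a)\mapsto 3$. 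In the second case $v$ becomes dominated by the new label-$4$ vertex $w$, and in the third case $v$ meets the label-$0$ requirement through the pair $(a,b)$ with $h(a)=3$ and $h(b)\ge 2$; in both the weight is unchanged and exactly one fewer vertex carries the label $1$.

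The step I expect to be the main obstacle is verifying that these local changes break the 3RDF condition at no \emph{other} vertex. Two observations drive this. First, a vertex of label $1$ never contributes to the domination requirement of any neighbour, since all neighbour requirements are phrased in terms of neighbours of label $\ge 2$; hence lowering $v$ from $1$ to $0$ cannot invalidate anyone else's constraint. Second, every condition in the definition of a 3RDF is monotone under raising a label: a neighbour moving from $2$ to $3$ (or $3$ to $4$) still serves every ``neighbour of value $\ge 2$'' requirement, and it upgrades any ``three neighbours of value $2$'' witness into the admissible ``one neighbour of value $3$ together with one of value $\ge 2$'' witness, and likewise an ``at least two neighbours of value $2$'' witness for a label-$1$ vertex into a ``one neighbour of value $\ge 3$'' witness. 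Checking these monotonicity cases for labels $0$, $1$, and $2$ completes the argument and, together with the easy inequality, establishes $\gamma_{3R}(G)=\gamma_{3R}'(G)$, that is, the equivalence of M3RDP-1 and M3RDP-2.
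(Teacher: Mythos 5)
Your proposal is correct and follows essentially the same route as the paper's proof: the same three local relabelling moves ($1$ and two $2$'s become $0,3,2$; $1$ adjacent to a $3$ becomes $0$ adjacent to a $4$; and the $1$-adjacent-to-$4$ case excluded by the preceding Lemma). You execute it more carefully than the paper does, by making the easy inclusion of feasible regions explicit, choosing a $\gamma_{3R}$-function with fewest label-$1$ vertices to drive the contradiction, and verifying via monotonicity that the relabellings do not violate the 3RDF condition at other vertices.
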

\vspace{-5mm}
\begin{proof} 
Clearly $\gamma_{3R}(G)$ is the optimal value of M3RDP-1 for a given  graph $G$. Let $\gamma_{3R}'(G)$ is the optimal value of M3RDP-2. Next, we show that $\gamma_{3R}(G)=\gamma_{3R}^{'}(G)$. In M3RDP-2 we have removed the variable \begin{math} p_v \end{math} which represents the vertices with $f(v) = 1$. In fact we don't need \begin{math} p_v \end{math} that is we can always rearrange labels of vertices in the closed neighbourhood such the weight of the closed neighbourhood does not change. Let's see how we can rearrange. \\
 
Let's say vertex $v$ has $f(v)=1$ then the conditions are such that it should have at least 2 adjacent vertices whose $f(u)=2$ or at least 1 adjacent vertex whose \begin{math}f(u) \ge 3 \end{math}. Now lets consider the possible scenarios of neighbourhood of vertex $v$.

1) Vertex $v$ has two such adjacent vertices $u,w$ where $f(u)=2$ and $f(w)=2$. Now we rearrange labels of $v,u,w$ as below
\begin{center}
$f(v)=0,f(u)=3,f(w)=2$
\end{center}

we can create a 3RDF $g$ defined as follows: $g(x)=f(x)$ for all \begin{math} x \notin \{v,u\} \end{math} and $g(v)=0$, $g(u)=3$. $g$ is a 3RDF with same weight as $f$.

2) Our second scenario is that vertex $v$ has at least 1 adjacent vertex $u$ such that $f(u)=3$ then we can rearrange labels of $v$ and $u$ as below
\begin{center}
$f(v)=0,f(u)=4$
\end{center}

Modified $f$ is also a 3RDF.

3) As discussed in lemma 1, there will be no optimal labelling of adjacent vertices $v$,$u$ with $f(v)=1$ and $f(u)=4$.
\end{proof}

\subsection{M3RDP-3}\label{sec5}
Further optimising the above ILP formulation M3RDP-2 we can remove constraint (2d). So our new optimised formulation will be 
\begin{subequations} 
    
    \begin{equation}
        \hspace{-20mm} Minimize \hspace{10mm} \sum\limits_{{v\in V}} 2q_v + \sum\limits_{{v\in V}} 3r_v + \sum\limits_{{v\in V}} 4s_v
    \end{equation}
    \hspace*{1cm} Such that
    \begin{equation} 
        q_v + r_v + s_v + \sum_{u \in N(v)}s_u + \frac{1}{3} \sum_{u \in N(v)}q_u + \frac{1}{2}\left( t_v + x_v \right)
     \ge 1
    \end{equation}
    \begin{equation}
        \sum_{u \in N(v)} \left( q_u + r_u + s_u \right )
        \le q_v
    \end{equation}
    \begin{equation}
        q_v , r_v , s_v , t_v , x_v
        \in \{0,1\}
    \end{equation}
\end{subequations}
Here the number of decision variables are 3$\lvert V \rvert$ (\begin{math}q_v,r_v,s_v\end{math}) and number of constraints are 3$\lvert V \rvert$.

\begin{theorem}
 Let \begin{math} \gamma_{3R}^{'}(G) \end{math} and \begin{math} \gamma_{3R}^{''}(G) \end{math} represent the optimal weight of 3RDF obtained using M3RDP-2 and M3RDP-3. Then  \begin{math} \gamma_{3R}^{'}(G) = \gamma_{3R}^{''}(G) \end{math}
\end{theorem}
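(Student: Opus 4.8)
The plan is to prove the two inequalities $\gamma_{3R}''(G) \le \gamma_{3R}'(G)$ and $\gamma_{3R}'(G) \le \gamma_{3R}''(G)$ separately. The first is immediate: M3RDP-3 is obtained from M3RDP-2 by deleting constraint (2d) while keeping the same objective and all of the remaining constraints. Hence every feasible point of M3RDP-2 is also feasible for M3RDP-3, and minimizing the same objective over a larger feasible region can only lower the optimum, giving $\gamma_{3R}''(G) \le \gamma_{3R}'(G)$.

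For the reverse inequality I would show that constraint (2d) is redundant at optimality of M3RDP-3. Let $(q,r,s,t,x)$ be an optimal solution of M3RDP-3 and suppose, for contradiction, that some vertex $v$ violates (2d), i.e.\ $q_v + r_v + s_v \ge 2$. Define a new assignment that sets $s_v = 1$ and $q_v = r_v = 0$, leaves every other label variable unchanged, and recomputes $t,x$ from their definitions. Since at least two indicators were set at $v$, its objective contribution drops from $2q_v + 3r_v + 4s_v \ge 5$ to exactly $4$, so the total weight strictly decreases.

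The crux is checking that this ``upgrade to label $4$'' keeps the modified point feasible for M3RDP-3. For $v$ itself, (3b) holds because the single term $s_v = 1$ already makes its left-hand side at least $1$, and (3c) holds because its right-hand side $q_v$ is now $0$. For a neighbour $w$ of $v$ the danger is that zeroing $q_v$ and $r_v$ may force the auxiliary variables $t_w$ and $x_w$ down to $0$, eroding the $\frac{1}{2}(t_w + x_w)$ slack in (3b) for $w$; this is the one delicate step. It is resolved by observing that $s_v = 1$ contributes a full $1$ to the sum $\sum_{u \in N(w)} s_u$ appearing in (3b) for $w$, so that left-hand side is at least $1$ irrespective of $t_w, x_w$, and likewise $s_v = 1$ forces $\sum_{u \in N(w)}(q_u + r_u + s_u) \ge 1 \ge q_w$ in (3c) for $w$. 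Hence no neighbour's constraint can fail, and the modified point is a feasible M3RDP-3 solution of strictly smaller weight, contradicting optimality.

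It follows that every optimal solution of M3RDP-3 already satisfies $q_v + r_v + s_v \le 1$ for all $v$, i.e.\ constraint (2d), and is therefore feasible for M3RDP-2; this yields $\gamma_{3R}'(G) \le \gamma_{3R}''(G)$. Combining the two inequalities gives $\gamma_{3R}'(G) = \gamma_{3R}''(G)$. I expect the feasibility bookkeeping for the neighbours' auxiliary variables $t_w$ and $x_w$ to be the main obstacle; once the dominating role of the $s$-term is noted, the remaining verifications are routine.
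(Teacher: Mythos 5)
Your proposal is correct, and it proves the same fact by the same underlying intuition as the paper --- that letting two or more of $q_v,r_v,s_v$ equal $1$ at a vertex is wasteful --- but your execution differs in a way that matters. The paper's proof partitions the feasible set of M3RDP-3 into $S_2$ (solutions satisfying (2d)) and $S_{32}=S_3\setminus S_2$, observes that at a violating vertex the contribution $H(v)=2q_v+3r_v+4s_v$ is at least $5$ while a single label costs at most $4$, and then simply asserts that the optimum over $S_{32}$ exceeds the optimum over $S_2$; it never exhibits, for a given $f\in S_{32}$, a feasible solution in $S_2$ of no greater weight, so the per-vertex comparison does not by itself close the argument. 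Your local exchange (collapse the multiple labels at $v$ to $s_v=1$, $q_v=r_v=0$) supplies exactly the missing construction, and your feasibility check is the nontrivial content: you correctly identify that the only risk is to constraints (3b) and (3c) at neighbours $w$ of $v$, and that $s_v=1$ alone already forces $\sum_{u\in N(w)}s_u\ge 1$ and $\sum_{u\in N(w)}(q_u+r_u+s_u)\ge 1$, so those constraints survive regardless of what happens to the auxiliary variables $t_w,x_w$. (You are also right to read the paper's constraint (3c) with the inequality reversed to match (2c); the $\le$ in the displayed formulation is evidently a typo, and your proof is written against the intended model in which $t_v,x_v$ track their stated definitions.) In short, your version is a rigorous completion of the paper's sketch: the containment argument for $\gamma_{3R}''\le\gamma_{3R}'$ is identical, and your exchange-plus-feasibility argument is what the paper's ``definitely strictly greater'' step actually requires.
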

\vspace{-2mm}
\begin{proof}
    Since the absence of constraint (2d) in M3RDP-3 allows 
    $p_v,q_v,r_v,s_v$ to take 1 simultaneously i.e for example both 
     $r_v,s_v$ can be 1 which has no significance while calculating triple Roman domination number using those decision variables for vertex $v$. If $S_2$ is the set of all possible 3RDFs using M3RDP-2 formulation and  similarly $S_3$ is the set of all possible 3RDFs for M3RDP-3 then it is clear that 
     \begin{math}
         S_2 \subseteq S_3 
     \end{math}
     which is shown in figure 1.\\
\begin{tikzpicture}
\hspace{5cm}
\draw (0,0) circle [radius=2] node {\hspace{2cm}$S_3$};
\draw(-0.75,0) circle [radius=1] node {$S_2$};
\end{tikzpicture}

From the above Venn diagram it is clear that any 3RDF given by M3RDP-3 is either from region \begin{math}S_2\end{math} or from region \begin{math}S_{32}(=S_3-S_2\end{math}). If we choose from $S_2$ there is no chance that we get optimal value lesser than M3RDP-2 formulation so we need to check from the set $S_{32}$. In $S_{32}$ at least 2 variables from \begin{math}q_v,r_v,s_v\end{math} will be equal to 1. Lets assume \begin{math}H(v)=2q_v+3r_v+4s_v\end{math}. For a vertex \begin{math}v \in V\end{math} and  \begin{math} f \in S_{32}\end{math}, the maximum value of  \begin{math}H(v)\end{math} will be 2+3+4=9 and minimum value of \begin{math}H(v)\end{math}  will be 5 with M3RDP-3 formulation where as if \begin{math} f \in S_2 \end{math} then maximum and minimum values of \begin{math}H(v)\end{math} are 4 and 2 respectively. We can see that maximum value \begin{math}H(v)\end{math} when \begin{math} f \in S_2\end{math} is less than that of minimum value \begin{math}H(v)\end{math} of when \begin{math} f \in S_{32}\end{math}. So definitely the optimal value with function from \begin{math}S_{32}\end{math} set will be strictly greater than the optimal value with function from \begin{math}S_2\end{math} set with M3RDP-3 formulation so we will be choosing function from \begin{math}S_2\end{math} set without the constraint (2d) also. From Theorem 1,  we know that M3RDP-2 gives optimal answer. Therefore from the fact that   M3RDP-3 chooses a function from set $S_2$ it follows that both the models are give the same result.
\end{proof}

\section{ILP Formulations for Quadruple Roman Domination Problem}\label{sec6}
In the similar way of construction of ILP models for M3RDP we have also provided ILP models for M4RDP.
\subsection{M4RDP-1}\label{sec7}
  \begin{center}
       \begin{math}
          p_v=\left\{
            \begin{array}{ll}
              1, & \mbox{$f(v)$ = 1}.\\
              0, & \mbox{otherwise}.
            \end{array}
          \right.
        \end{math} 
        \begin{math}
          q_v=\left\{
            \begin{array}{ll}
              1, & \mbox{$f(v)$ = 2}.\\
              0, & \mbox{otherwise}.
            \end{array}
          \right.
        \end{math} 
        \begin{math}
          r_v=\left\{
            \begin{array}{ll}
              1, & \mbox{$f(v)$ = 3}.\\
              0, & \mbox{otherwise}.
            \end{array}
          \right.
        \end{math} 
        \begin{math}
          s_v=\left\{
            \begin{array}{ll}
              1, & \mbox{$f(v)$ = 4}.\\
              0, & \mbox{otherwise}.
            \end{array}
          \right.
        \end{math}
\hspace{-0.5mm}
  \begin{math}
              t_v=\left\{
                \begin{array}{ll}
                  1, & \mbox{$f(v)$ = 5}.\\
                  0, & \mbox{otherwise}.
                \end{array}
              \right.
            \end{math} 
             \begin{math}
              x_v=\left\{
                \begin{array}{ll}
                  1, & \mbox{\begin{math}\sum\limits_{{u\in N(v)}} q_u \ge 1 \end{math}}.\\
                  0, & \mbox{otherwise}.
                \end{array}
              \right.
            \end{math} 
            \begin{math}
              y_v=\left\{
                \begin{array}{ll}
                  1, & \mbox{\begin{math}\sum\limits_{{u\in N(v)}} r_u \ge 1 \end{math}}.\\
                  0, & \mbox{otherwise}.
                \end{array}
              \right.
            \end{math} 
            \vspace{0.5cm}
            \begin{math}
              z_v=\left\{
                \begin{array}{ll}
                  1, & \mbox{\begin{math}\sum\limits_{{u\in N(v)}} s_u \ge 1 \end{math}}.\\
                  0, & \mbox{otherwise}.
                \end{array}
              \right.
            \end{math} 
             \begin{math}
              a_v=\left\{
                \begin{array}{ll}
                  1, & \mbox{\begin{math}\sum\limits_{{u\in N(v)}} q_u \ge 2 \end{math}}.\\
                  0, & \mbox{otherwise}.
                \end{array}
              \right.
            \end{math} 
        \end{center} 
        
\begin{subequations} 
    \begin{equation}
        \hspace{-20mm} 
        Minimize  \hspace{10mm} \sum\limits_{{v\in V}} p_v + 2\sum\limits_{{v\in V}}q_v+ 
        3\sum\limits_{{v\in V}}r_v + 4\sum\limits_{{v\in V}}s_v+ 5\sum\limits_{{v\in V}}t_v \\
    \end{equation}
    Such that
    \begin{equation} 
    \begin{split}
        p_v+q_v+r_v+s_v+t_v+\sum\limits_{{u \in N(V)}}t_u+\frac{1}{2}(x_v+z_v)+\frac{1}{2}(z_v+y_v)+\frac{1}{2}(a_v+y_v) \\+\frac{1}{2}\sum\limits_{{u \in N(v)}} s_u+\frac{1}{2} \sum\limits_{{u \in N(v)}} r_u
       +\frac{1}{4} \sum\limits_{{u \in N(v)}} q_u \ge 1
    \end{split}
    \end{equation}
    \begin{equation} 
      \sum\limits_{{u \in N(v)}}t_u+\sum\limits_{{u \in N(v)}}s_u+ \frac{1}{2} \left(y_v+x_v\right)+\frac{1}{2} \sum\limits_{{u \in N(v)}} r_u+\frac{1}{3} \sum\limits_{{u \in N(v)}} q_u \ge p_v
    \end{equation}
 \begin{equation} 
      \sum\limits_{{u \in N(v)}} \left (t_u+s_u+r_u \right)+\frac{1}{2}\sum\limits_{{u \in N(v)}}q_u \ge q_v
    \end{equation}
    \begin{equation} 
      \sum\limits_{{u \in N(v)}} \left ( t_u + s_u + q_u + r_u  \right ) \ge r_v
    \end{equation}
    \begin{equation}
        p_v + q_v + r_v + s_v + t_v
        \le 1
    \end{equation}
    \begin{equation}
       p_v, q_v , r_v , s_v , t_v , x_v
        \in \{0,1\}
    \end{equation}
\end{subequations}
As we can see number of decision variables are 5$\lvert V \rvert$ (\begin{math}p_v,q_v,r_v,s_v,t_v\end{math}) and number of constraints are 6$\lvert V \rvert$.

Constraint \textcolor{blue}{(4a)} gives the objective function value which is the quadruple Roman domination number \begin{math}\gamma_{4R}(G)\end{math}. Constraint \textcolor{blue}{(4b)} is to ensure that every vertex with $f(v)=0$ should have at least 1 neighbour with $f(v)=5$ or at least 2 vertices $v$,$w$ such that $f(v)=4$ and $f(w)\ge2$ or at least 2 vertices $v$,$w$ with $f(v)=3$ and $f(w)\ge3$ or at least four vertices with $f(v)=2$. Constraint \textcolor{blue}{(4c)} is to ensure that for every vertex $v$ with $f(v)=1$ there should be at least 1 vertex with $f(v)\ge4$ or at least 2 vertices $v$,$w$ with $f(v)=3$ and $f(w)\ge2$. Constraint \textcolor{blue}{(4d)} ensures that for every vertex $v$ with $f(v)=2$ there should be at least 1 
neighbour with $f(v)\ge3$ or at least 2 vertices with $f(v)\ge2$.
Constraint \textcolor{blue}{(4e)} is to ensure that for every vertex $v$ with $f(v)=3$ there should be at least 1 neighbour with $f(v)\ge2$. Constraint \textcolor{blue}{(4f)} is to ensure that every vertex gets only one label from the set \{0,1,2,3,4,5\}. Constraint \textcolor{blue}{(4g)} ensures that variables \begin{math}p_v,q_v,r_v,s_v,t_v,x_v\end{math} does not take any other integer values other than 0 or 1, restricting them to be decision variables.

\subsection{M4RDP-2}\label{sec8}
In this section, we proposed an ILP formulation of M4RDP with only 4 decision variables. Substituting $p_v$ = 0 in M4RDP-1 gives us a new model M4RDP-2. After reducing one variable the formulation will become
\begin{subequations} 
    \begin{equation}
        \hspace{-20mm} 
        Minimize \hspace{5mm} 2\sum\limits_{{v\in V}}q_v+ 3 \sum\limits_{{v\in V}}r_v + 4\sum\limits_{{v\in V}}s_v+ 5 \sum\limits_{{v\in V}}t_v \\
    \end{equation}
    Such that
    \begin{equation} 
    \begin{split}
         q_v+r_v+s_v+t_v+\sum\limits_{{u \in N(v)}}t_u+\frac{1}{2}(x_v+z_v)+\frac{1}{2}(z_v+y_v)+\frac{1}{2}(a_v+y_v)\\+\frac{1}{2}\sum\limits_{{u \in N(v)}} s_u+\frac{1}{2} \sum\limits_{{u \in N(v)}} r_u
       +\frac{1}{4} \sum\limits_{{u \in N(v)}} q_u \ge 1
    \end{split}
    \end{equation}
   \begin{equation} 
      \sum\limits_{{u \in N(v)}} \left (t_u+s_u+r_u \right)+\frac{1}{2}\sum\limits_{{u \in N(v)}}q_u \ge q_v
    \end{equation}
 \begin{equation} 
      \sum\limits_{{u \in N(v)}} \left ( t_u + s_u + q_u + r_u  \right ) \ge r_v
    \end{equation}
    \begin{equation}
        q_v + r_v + s_v + t_v
        \le 1
    \end{equation}
    \begin{equation}
       q_v , r_v , s_v , t_v , x_v
        \in \{0,1\}
    \end{equation}
\end{subequations}
\begin{lemma}
For optimal labelling of vertices for quadruple Roman domination function, there doesn't exist a labelling where two adjacent vertices $v$,$u$ labels as $f(v)=1$ and $f(u)=5$.
\end{lemma}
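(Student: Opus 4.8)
The plan is to argue by contradiction, mirroring the weight-reduction construction used for Lemma 1 in the triple case. Suppose some $\gamma_{4R}$-function $h$ of $G$ assigns $h(s)=1$ and $h(t)=5$ to a pair of adjacent vertices $s,t$. From $h$ I would build a new labelling $h'$ that simply demotes $s$ to $0$ and leaves every other vertex untouched:
\[
h'(v) =
\begin{cases}
0, & \text{if } v=s,\\
h(v), & \text{otherwise.}
\end{cases}
\]
Since $w(h') = w(h) - 1$, establishing that $h'$ is still a 4RDF would contradict the minimality of $h$ and thereby prove the lemma.

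The key step, and the heart of the argument, is the observation that a vertex carrying the label $1$ can never help any of its neighbours meet their quadruple Roman condition. Inspecting the four defining requirements of a 4RDF (for vertices valued $0,1,2,3$), every clause counts only neighbours whose $f$-value is at least $2$ (the thresholds $5$, $4$, $\ge 3$, and $=2$ appear, but never $1$). Consequently, lowering $s$ from $1$ to $0$ removes a contribution that was never being used by any neighbour $w \in N(s)$; each such $w$ retains exactly the same set of supporting neighbours it had under $h$, so its condition still holds under $h'$.

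It then remains to check the two vertices directly involved. The vertex $t$ keeps its label $5$, and since $5$ exceeds the threshold $k=4$, $t$ imposes no condition on itself and is unaffected. The vertex $s$, now valued $0$, must satisfy one of the clauses for a value-$0$ vertex; but $s$ is adjacent to $t$ with $h'(t)=5$, which fulfils the clause ``one neighbour $u$ with $f(u)=5$.'' Hence $h'$ is a valid 4RDF of strictly smaller weight, the desired contradiction.

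The main obstacle is not conceptual but a matter of bookkeeping: one must verify exhaustively that none of the (many) alternative clauses in the 4RDF definition, in particular the compound ones such as ``two vertices with $f=2$ and one with $f=3$'' or ``two vertices $u,w$ with $f(u)=4$ and $f(w)\ge 2$,'' ever tallies a value-$1$ neighbour. Once this case analysis is dispatched, the weight-reduction argument is immediate and identical in spirit to Lemma 1.
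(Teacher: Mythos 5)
Your proposal is correct and follows essentially the same route as the paper: demote the label-$1$ vertex to $0$, note that no clause in the 4RDF definition ever counts a label-$1$ neighbour so no other vertex is affected, and observe that the demoted vertex is covered by its label-$5$ neighbour, yielding a weight decrease that contradicts optimality. Your write-up is in fact more careful than the paper's one-line argument, but the underlying idea is identical.
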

\begin{proof}
Similar to lemma 1, there will be no such optimal combination since we can change vertex $v$ label to 0 because no other label depends on vertex with label 1.Changing label of vertex $v$ from 1 to 0 decreases weight of the function $f$ and resulting in more optimal value. 
\end{proof}

\begin{theorem} 
M4RDP-1 is equivalent to M4RDP-2.
\end{theorem}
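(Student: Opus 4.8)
The plan is to mirror the proof of Theorem~1. First I would observe that M4RDP-2 is obtained from M4RDP-1 by forcing $p_v=0$, so its feasible solutions are exactly the 4RDFs that assign no vertex the label $1$ (encoded through $q_v,r_v,s_v,t_v$). Since this is a restriction of the feasible region of M4RDP-1, writing $\gamma_{4R}'(G)$ for the optimum of M4RDP-2 we get immediately $\gamma_{4R}'(G)\ge\gamma_{4R}(G)$. The real content is the reverse inequality: every $\gamma_{4R}$-function can be converted, with no change in weight, into a 4RDF that uses no label $1$.

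Second, I would start from a $\gamma_{4R}$-function $h$ and, as long as some vertex $v$ has $h(v)=1$, apply a weight-preserving local rearrangement determined by which clause of the label-$1$ condition $v$ satisfies. Since $h$ is optimal, Lemma~2 rules out $v$ having a neighbour labelled $5$, so the three cases are: \emph{(a)} $v$ has a neighbour $u$ with $h(u)=4$, where I promote $u$ to $5$ and set $v$ to $0$ ($1+4=0+5$); \emph{(b)} $v$ has neighbours $u,w$ with $h(u)=3$ and $h(w)\ge2$, where I promote $u$ to $4$ and set $v$ to $0$ ($1+3=0+4$); \emph{(c)} $v$ has three neighbours of value $2$, where I promote one of them to $3$ and set $v$ to $0$ ($1+2=0+3$). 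In each case the total weight is unchanged, and after the move $v$ (now $0$) satisfies a label-$0$ clause: a neighbour with $5$ in (a); a $4$ together with a $\ge2$ in (b); one $3$ together with two $2$'s in (c).

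Third, I would argue that validity is preserved. The only label that ever decreases is $h(v)\colon 1\to 0$; since a vertex of value $1$ never contributes to any neighbour's domination requirement (every such requirement counts only neighbours of value at least $2$), lowering $v$ to $0$ cannot invalidate any other vertex. Every other changed label is increased, which can only help the ``at least'' conditions of all vertices. The one subtlety, and the main thing to check carefully, is the promoted vertex's own obligation: in case (c) the vertex raised from $2$ to $3$ must now satisfy the label-$3$ clause (a neighbour of value $\ge2$), and this holds because a vertex of value $2$ already had either a neighbour of value $\ge3$ or two neighbours of value $2$, each of which supplies a neighbour of value $\ge2$. In cases (a) and (b) the promoted vertices reach $5$ and $4$, which carry no condition.

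Finally, since each rearrangement strictly decreases the number of label-$1$ vertices while keeping the weight equal to $\gamma_{4R}(G)$ and keeping the function a valid 4RDF, finitely many applications yield a 4RDF of weight $\gamma_{4R}(G)$ with no label $1$, hence feasible for M4RDP-2. This gives $\gamma_{4R}'(G)\le\gamma_{4R}(G)$, and combined with the restriction inequality we conclude $\gamma_{4R}'(G)=\gamma_{4R}(G)$, so the two models are equivalent. I expect the case-(c) verification of the promoted vertex's new label-$3$ requirement to be the only nonroutine step; everything else follows from the monotonicity of the ``at least'' conditions in the neighbours' values.
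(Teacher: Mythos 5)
Your proof is correct and follows essentially the same route as the paper's: Lemma~2 rules out a neighbour labelled $5$, and each remaining clause of the label-$1$ condition is handled by a weight-preserving local rearrangement that resets $v$ to $0$. The only cosmetic difference is in the $(3,\ge 2)$ case, where you promote the $3$ to a $4$ while the paper promotes the $2$ to a $3$ (both give a valid label-$0$ clause at $v$); your write-up is simply more careful than the paper's about verifying that validity is preserved and that the elimination process terminates.
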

\vspace{-5mm}
\begin{proof} 
Similar to theorem 1, in M4RDP-2 we have removed the variable \begin{math} p_v \end{math} which represents the vertices with $f(v) = 1$. Rearranging labels can be done as follows.

Let's say vertex $v$ has $f(v)=1$ then the conditions are such that it should have at least 1 adjacent vertex $u$ whose $f(u)=5$ or at least 1 adjacent vertex $u$ whose $f(u)=4$ or at least one adjacent vertex $u$ with $f(u)=3$ and at least adjacent vertex $w$ with $f(w)=2$ or at least 3 adjacent vertices with label.
\begin{math}f(u) \ge 3 \end{math}. Now lets consider the possible scenarios of neighbourhood of vertex $v$.\\
1) From lemma 2, there will no optimal labelling of adjacent vertices $v,u$ with $f(v)=1$ and $f(u)=5$.\\
2) Vertex $v$ has at least adjacent vertex $u$ where $f(u)=4$ . Now we rearrange labels of $v,u,w$ as below
\begin{center}
$f(v)=0,f(u)=5$
\end{center}

we can create a triple Roman domination function $g$ defined as follows: $g(x)=f(x)$ for all \begin{math} x \notin \{v,u\} \end{math} and $g(v)=0$, $g(u)=5$. $g$ is a 4RDF with same weight as $f$.\\
3) Vertex $v$ has at least 1 adjacent vertex $u$ such that $f(u)=3$ and at least 1 adjacent vertex $w$ such that $f(w)=2$ then we can rearrange labels of $v$,$u$ and $w$ as below
\begin{center}
$f(v)=0,f(u)=3,f(w)=3$
\end{center}

Modified $f$ also 4RDF of $G$.\\
4) Vertex $v$ has at least 3 adjacent vertex $u,w,y$ such that $f(u)=2$ then we can rearrange the labels of $v$,$u$,$w$ and $y$ as below
\begin{center}
$f(v)=0,f(u)=2,f(w)=2,f(y)=3$
\end{center}
 \end{proof}
Here number of decision variables are 4$\lvert V \rvert$ (\begin{math}q_v,r_v,s_v,t_v\end{math}) and number of constraints are 5$\lvert V \rvert$.
\subsection{M4RDP-3}\label{sec9}
Further optimising the above ILP formulation M4RDP-2, we can remove constraint (5e). So our new optimised formulation will be 
    
\begin{subequations} 
    \begin{equation}
        \hspace{-20mm} 
        Minimize \hspace{5mm} 2\sum\limits_{{v\in V}}q_v+ 3 \sum\limits_{{v\in V}}r_v + 4\sum\limits_{{v\in V}}s_v+ 5 \sum\limits_{{v\in V}}t_v \\
    \end{equation}
    Such that
    \begin{equation}
    \begin{split}
      q_v+r_v+s_v+t_v+\sum\limits_{{u \in N(v)}}t_u+\frac{1}{2}(x_v+z_v)+\frac{1}{2}(z_v+y_v)+\frac{1}{2}(a_v+y_v) \\ +\frac{1}{2}\sum\limits_{{u \in N(v)}} s_u 
       +\frac{1}{2} \sum\limits_{{u \in N(v)}} r_u
       +\frac{1}{4} \sum\limits_{{u \in N(v)}} q_u \ge 1
       \end{split}
    \end{equation}
 \begin{equation} 
      \sum\limits_{{u \in N(v)}} \left (t_u+s_u+r_u \right)+\frac{1}{2}\sum\limits_{{u \in N(v)}}q_u \ge q_v
    \end{equation}
    \begin{equation} 
      \sum\limits_{{u \in N(v)}} \left ( t_u + s_u + q_u + r_u  \right ) \ge r_v
    \end{equation}
    \begin{equation}
       q_v , r_v , s_v , t_v , x_v
        \in \{0,1\}
    \end{equation}
\end{subequations}
Here number of decision variables are 4$|V|$ (\begin{math}q_v,r_v,s_v,t_v\end{math}) and number of constraints are 4 $|V|$. Let's say \begin{math} \gamma_{4R}^{''}(G) \end{math} is the optimal value of quadruple Roman domination number we get from M4RDP-3. Similar to theorem 2, we can also prove that \begin{math} \gamma_{4R}^{''}(G) \end{math} is same as \begin{math} \gamma_{4R}(G) \end{math} 
\section{Computational Results}\label{sec10}
In this section, we showed the results of all the models we introduced in section 3. All the models are solved using IBM CPLEX 22.1 optimization solvers. NetworkX, a python package is used to generate random graphs. Whole computaion is performed on MacOS Monterey version 12.6, M1 chip 2020, 8GB RAM.
Random graphs are generated using NetworkX Erdős–Rényi model. We have 2 tables representing the running time of our models on the random graphs. Table 1 contains 4 columns instance refers to properties of graph $\lvert V \rvert $ means number of vertices, $\lvert E \rvert$ number of edges, $p$ is the probability of each edge whether to take into generation of graph or not while generating random graphs. Second column represent our first model in section 2 and it contains 2 sub columns result and time, result is \begin{math}\gamma_{3R}(G)\end{math} and time is the time taken to get solution using M3RDP-1. Similarly next columns represent our remaining model      performances. We can see that all graphs are giving same results for same graph which represents correctness of our models. Similarly Table-2 represents our ILP model performances for minimum quadruple Roman domination problem. We are using '-' sign to represent that optimal solution is not being found in 1800 seconds for that particular model. We can see that for most of the cases M3RDP-3 model performs better than the other two models for triple Roman domination problem. For quadruple Roman domination problem, M4RDP-3 performs better than remaining two.

\begin{table*}[h] 
        \centering
        \caption{Performances of ILP Model for triple Roman domination problem}       
            
        \begin{tabular}{l c c c c c c c c c}
        \toprule
        \multicolumn{3}{c}{\small \textbf{Instance}}&
        \multicolumn{2}{c}{\small\textbf{M3RDP-1}} &
        \multicolumn{2}{c}{\small\textbf{M3RDP-2}} &
        \multicolumn{2}{c}{\small \textbf{M3RDP-3}} \\ 
        \hline
        $ \mid V \mid $&$ \mid E \mid $&p&result&time&result&time&result&time\\
        \hline
       25 & 50 &
  0.2 & 26 & 0.44 & 26 & 0.38 &  26 & 0.39 \\
  &
  141 &
  0.5 &
  11 &
  0.45 &
  11 &
  0.43 &
  11 &
  0.81 \\
 &
  248 &
  0.8 &
  7 &
  0.37 &
  7 &
  0.43 &
  7 &
  0.35 \\
\hline
  50 & 250 &
  0.2 &
  24 &
  14.05 &
  24 &
  5.82 &
  24 &
  3.95 \\
 &
  631 &
  0.5 &
  12 &
  174.49 &
  12 &
  330.18 &
  12 &
  89.37 \\
 &
  985 &
  0.8 &
  8 &
  3.51 &
  8 &
  2.04 &
  8 &
  2.42 \\
  \hline
    75 & 556 &  0.2 & - & - & -& -& - & -  \\ &
    1275 & 0.5	& - &	- & 12 & 51.85 & 12 & 156.68 & \\ &
    2238 & 0.8 & 8 & 23.79 & 8 & 6.16 & 8 & 11.75 \\
 \hline
    
    100 & 1015 & 0.2	& - & - & -& -& - & - \\ &
    2502 & 0.5 & - & - & -& -& - & -  \\ & 
    3961 & 0.8 & 8 & 17.02 & 8 & 16.72 & 8 & 19.36 \\
\hline
    150 & 2257 & 0.2	& - & - & -& -& - & - \\ &
    5530 & 0.5 & - & - & -& -& - & -  \\ & 
    8973 & 0.8 & 8 & 16.52 & 8 & 14.22 & 8 & 15.45 \\
\hline
        \end{tabular} 
\end{table*}

\begin{table*}
         \centering
        \caption{Performances of ILP Model for quadruple Roman domination problem}       
            
        \begin{tabular}{l c c c c c c c c c}
        \toprule
        \multicolumn{3}{c}{\small \textbf{Instance}}&
        \multicolumn{2}{c}{\small\textbf{M4RDP-1}} &
        \multicolumn{2}{c}{\small\textbf{M4RDP-2}} &
        \multicolumn{2}{c}{\small \textbf{M4RDP-3}} \\ 
        \hline
         $ \mid V \mid $ & $\mid E \mid $ &p&result&time&result&time&result&time\\
        \hline
        10 & 10  & 0.2 & 18 & 1.15 & 18 & 1.42   & 18 & 0.91   \\
   & 26  & 0.5 & 11 & 4.38 & 11 & 1.29   & 11 & 0.94   \\
   & 31  & 0.8 & 9  & 3.09 & 9  & 0.89   & 9  & 0.81   \\
   \hline
25 & 56  & 0.2 & -  & -   & 25 & 760.52 & 25 & 642.75 \\
   & 145 & 0.5 & -  & -    & 14 & 1247.1 & -  & -     \\
   & 244 & 0.8 & -  & -    & 9  & 76.41  & 9  & 61.31  \\
  \hline
50 & 242 & 0.2 & -  & -   & -  & -     & -  & -    \\
   & 601 & 0.5 & -  & -   & -  & -    & - & -   \\
   & 999 & 0.8 & - & -   & 9  & 1416   & - & -    \\
  \hline
        \end{tabular} 
\end{table*}
\pagebreak
\section{Conclusion}\label{sec11}
It is evident from both the tables that M3RDP-3 model is outperforming in most of the cases for M3RDP and M4RDP-3 model is outperforming for M4RDP.
As can be seen from Table-1, models are becoming infeasible from 100 vertices for M3RDP where as in Table-2 models are becoming infeasible from 50 vertices. From both the tables we can conclude that if the number of vertices are increasing, models are becoming infeasible within time limit.
Hence, providing ILP formulations for M3RDP and M4RDP with lesser number of constraints so that CPLEX solvers can successfully find solutions of larger
graph instances is an interesting direction for future work.\\
\bibliography{sn-bibliography}

\end{document}